\newtheorem{thm}{Theorem}
\theoremstyle{definition}
\newcommand{\ignore}[1]{}
\newcommand{\PP}{\mathbb{P}}
\begin{document}

\title{The optimal rate for resolving a near-polytomy in a phylogeny}
\author{Mike Steel$^1$ and Christoph Leuenberger$^2$}
\maketitle

\noindent {\em Affiliations:}
$^1$Biomathematics Research Centre, University of Canterbury, 8041, Christchurch, New Zealand;
$^2$D{\'e}partement de math{\'e}matiques, Universit{\'e} de Fribourg, Chemin du Mus{\'e}e 3, 1705 Fribourg, Switzerland.
\bigskip

\noindent{\em Email:} mike.steel@canterbury.ac.nz, christoph.leuenberger@unifr.ch

\noindent{\em Corresponding author:} Mike Steel 

\dedicatory{\today}

\begin{abstract}
The reconstruction of phylogenetic trees from discrete character data  typically relies on models that assume the characters evolve under a continuous-time Markov process operating at some overall rate $\lambda$.  When $\lambda$ is too high or too low, it becomes difficult to distinguish a short interior edge from  
a polytomy (the tree that results from collapsing the edge). 
In this note, we investigate the rate that maximizes the expected log-likelihood ratio (i.e. the Kullback--Leibler separation) between the  four-leaf unresolved (star) tree and a four-leaf binary tree with interior edge length $\epsilon$.  For a simple two-state model, we show that  as $\epsilon$ converges to $0$ the optimal rate also converges to zero when the four pendant edges have equal length. However, when the four pendant branches have unequal length, two local optima can arise, and it is possible for the globally optimal rate to converge to a non-zero constant as $\epsilon \rightarrow 0$.   Moreover, in the setting where the four pendant branches have equal lengths and either (i) we replace the two-state model by an infinite-state model or (ii) we retain the  two-state model and replace the
Kullback--Leibler separation by Euclidean distance as the maximization goal, then the optimal rate also converges to a non-zero constant.

\end{abstract}
\noindent {\em Keywords:} Phylogenetic tree, Markov process, Optimal rate, Kullback--Leibler separation, Fisher information
\newpage

\section{Introduction}

When discrete characters evolve on a phylogenetic tree under a continuous-time Markov process, the states at the leaves provide information about the identity of the underlying tree.   It is known that when the overall substitution rates becomes  too high or too low, it becomes increasingly impossible to distinguish the tree from a less resolved tree (or indeed from any other tree) using  any given number of characters.

In particular, suppose we take a tree $T$ with an interior edge $e$ of length $\epsilon$ and we search for an overall substitution rate $\lambda_\epsilon$ that optimally discriminates (under some metric or criterion)  between $T_\epsilon$ and $T_0$ (i.e. the tree that has the same topology and branch lengths as $T_\epsilon$ except that $e$ has been collapsed (i.e. has length $0$)).  This optimal rate depends in an interesting way on the tree's branch lengths (and the metric or criterion used), as revealed by several studies over the last two decades (see, for example, \cite{fis09, lew16, tow07, tow11, yan98}), and applied to the study of data sets
(see, for instance, \cite{kop10, tow12}).  

In this short note, we consider a more delicate question that leads to some curious subtleties in its answer. Namely, how does $\lambda_\epsilon$ behave as
$\epsilon$ tends to zero? For simplicity, we consider  the four-leaf tree and two simple substitution models. We find that the answer to this question 
depends rather crucially on three things: whether the state space is finite or infinite, the metric employed, and the degree of imbalance in the branch lengths. 
Our results provide some analytic insight into simulation-based findings reported by  \cite{kop10} (in the second part of their section entitled `Optimum Rates of Evolution'); specifically, the optimal rate in the finite-state setting can behave differently from the optimal rate for generating characters that are parsimony-informative and homoplasy-free.

\section{Optimal rate results}

Consider a binary phylogenetic tree $T_\epsilon$ with four pendant edges of length $L$ and an interior edge of length $\epsilon$, as shown in Fig.~\ref{fig1}(i).
  Now consider a Markovian process that generates states at the leaves of $T_\epsilon$. We consider two models in this paper:  (a) the two-state symmetric model (sometimes referred to as the Neyman two-state model or the Cavender-Farris-Neyman model), and (b) an infinite-allele model (in which a change of state always leads to a new state, a model often referred to as the infinite alleles model of Crow and Kimura \cite{kim64}, or the random cluster model \cite{ste16}).  For both models the induced partition of the leaf set (in which the blocks are the subsets of leaves in the same state) will be referred to as a {\em character}.  Thus for the two-state model, there are exactly eight possible characters that can arise on $T_\epsilon$, while for the infinite-allele model, there are 13 when $\epsilon>0$ or 12 when $\epsilon = 0$ (there are 15 partitions of the set of four leaves of $T_\epsilon$, however when $\epsilon>0$ (resp. $\epsilon = 0)$ two (resp. three) have zero probability of being generated).

Suppose that  the branch lengths are all multiplied by a rate factor $\lambda \geq 0$, and let $P_\epsilon$ be the probability distribution on characters. 
Let  $P_0$ be the probability distribution on characters under the corresponding model on the star tree $T_0$ (shown in Fig.~\ref{fig1}(ii)).

\begin{figure}[htb]
  \begin{center}
  \includegraphics[width=7cm]{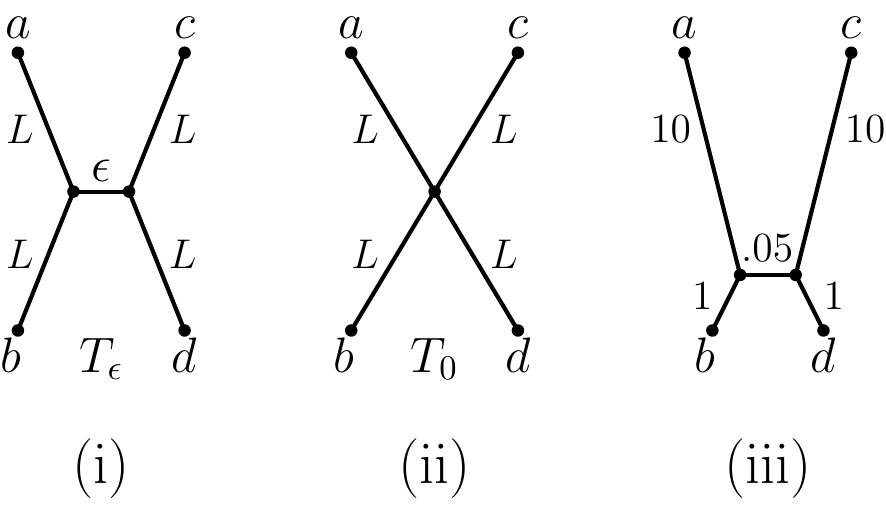}
  \end{center}
  \caption{(i) A binary four-leaf  tree $T_\epsilon$ with a short interior edge of length $\epsilon$ and four pendant edges of equal length $L$.  (ii) The star tree obtained from
  $T_\epsilon$ by setting $\epsilon=0$. (iii) A tree exhibiting two local optima for the rate $\lambda$ maximizing $d_{{\rm KL}}(P_0, P_\epsilon)$. }
  \label{fig1}
\end{figure}

Now, suppose that a data set $D$ of $k$ characters is generated by an independent and identically distributed (i.i.d.) process on the (unresolved) star  tree  $T_0$ (under either Model (a) or Model (b)). 
Let  ${\rm LLR}$ denote the log-likelihood ratio of the star tree  $T_0$ to the resolved tree $T_\epsilon$ (i.e. the logarithm of the 
ratio $\PP(D|T_0)/\PP(D|T_\epsilon)$).   As $k$ grows, $\frac{1}{k}{\rm LLR}$ converges in probability to its (constant) expected value,  which is precisely the  Kullback--Leibler separation (see \cite{cov06}):
 $$d_{{\rm KL}}(P_0, P_\epsilon)= \sum_{i} P_0(i) \ln \left(\frac{P_0(i)}{P_\epsilon(i)}\right),$$
 where the summation is over all the possible characters.

 Let $\lambda_\epsilon$ be a value of $\lambda$ that maximizes $d_{{\rm KL}}(P_0, P_\epsilon)$. From the previous paragraph, this is the rate that provides the largest expected  likelihood ratio in favour of the generating tree $T_0$ over an alternative resolved tree with an internal edge of length $\epsilon$.   We are interested in what happens to $\lambda_\epsilon$ as $\epsilon$ tends to zero. In that case, $d_{{\rm KL}}(P_0, P_\epsilon)$ also
converges to zero, but  it is not immediately clear whether the optimal rate that helps to distinguish $T_0$ from $T_\epsilon$ by maximizing $d_{{\rm KL}}(P_0, P_\epsilon)$ should be increasing, decreasing or converging to some constant value. A large rate improves the probability of a state-change occurring  on the central edge of $T_\epsilon$ however,
this comes at the price of greater randomization on the pendant edges, which tends to obscure the signal of such a change based on just the states at the leaves. 

It turns out that the limiting behaviour of
$\lambda_\epsilon$ depends crucially on whether the state space is finite or infinite. In Part (i) of the following theorem, we consider just the 
two-state symmetric model (see e.g. Chapter 7 of \cite{ste16}) 
but we indicate in Fig.~\ref{fig5} that a similar result appears to hold for the symmetric model on any number of states.   
The result in Part (i) contrasts with that in Part (ii) for the infinite-allele model, in which homoplasy (i.e. substitution to a state that has appeared elsewhere in the tree) does not arise. This second result is different from (but consistent with) a related result in \cite{tow07}.

\begin{thm}
\label{thm1}
\mbox{}
\begin{itemize}
\item[(i)]
For the two-state symmetric model, 
$$\lim_{\epsilon \rightarrow 0} \lambda_\epsilon = 0.$$
\item[(ii)]
By contrast, for the infinite-allele model,
$$\lim_{\epsilon \rightarrow 0} \lambda_\epsilon = \frac{1}{4L}.$$
\end{itemize}.
\end{thm}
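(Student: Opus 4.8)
The plan is to obtain, for each model, an explicit small-$\epsilon$ expansion of $d_{{\rm KL}}(P_0,P_\epsilon)$ as a function of $\lambda$, and then locate the maximizer. Writing $P_\epsilon(i)=P_0(i)+\epsilon\,a_i(\lambda)+O(\epsilon^2)$ with $a_i(\lambda)=\partial_\epsilon P_\epsilon(i)|_{\epsilon=0}$, and using $\sum_i a_i=0$, a term-by-term expansion of $P_0(i)\ln(P_0(i)/P_\epsilon(i))$ reveals a dichotomy. If every character with $P_\epsilon(i)>0$ also has $P_0(i)>0$, the linear term cancels and $d_{{\rm KL}}=\tfrac{\epsilon^2}{2}F(\lambda)+O(\epsilon^3)$, where $F(\lambda)=\sum_i a_i(\lambda)^2/P_0(i)$ is a Fisher-information-type coefficient. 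But if some informative character $i^\star$ has $P_0(i^\star)=0<P_\epsilon(i^\star)$, then $i^\star$ drops out of the KL sum while its derivative is absent from the surviving cancellation, leaving a \emph{nonzero linear} term $d_{{\rm KL}}=-\epsilon\sum_{i:P_0(i)>0}a_i+O(\epsilon^2)$. This is exactly the finite- versus infinite-state distinction, and it drives the whole result.

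To make this concrete I would compute the character distributions. For the two-state model I would use the Fourier (Hadamard) parametrization: with $\alpha=e^{-2\lambda L}$ and $\beta=e^{-2\lambda\epsilon}$, only the two characters matching the interior split---the constant character $\{1234\}$ and $\{12|34\}$---depend on $\beta$, with probability $m=\tfrac{\alpha^2}{2}(1-\beta)$ flowing from the former to the latter; every character keeps positive probability at $\epsilon=0$, so we are in the quadratic regime. For the infinite-allele model I would use the random-cluster description (delete each edge where a substitution occurs and read off the leaf partition), giving closed forms in $q=e^{-\lambda L}$; here $\{12|34\}$ satisfies $P_\epsilon=q^4(1-e^{-\lambda\epsilon})$ but $P_0=0$, so it is the informative character $i^\star$, and a short check gives $-\sum_{i:P_0(i)>0}a_i=\lambda q^4$, whence $d_{{\rm KL}}=\epsilon\,\lambda e^{-4\lambda L}+O(\epsilon^2)$.

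Part (ii) then follows quickly: to leading order one maximizes $g(\lambda)=\lambda e^{-4\lambda L}$, and $g'(\lambda)=e^{-4\lambda L}(1-4\lambda L)$ vanishes at the unique, non-degenerate maximum $\lambda=\frac{1}{4L}$. Since $d_{{\rm KL}}(P_0,P_\epsilon)/\epsilon\to g(\lambda)$ and $g$ has a strict interior maximum, a standard argmax-convergence argument (uniform convergence on compacta, together with the fact that $d_{{\rm KL}}\to0$ as $\lambda\to0$ and as $\lambda\to\infty$, ruling out escape of the maximizer) yields $\lambda_\epsilon\to\frac{1}{4L}$.

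Part (i) is more delicate and is where I expect the real work. Here the leading coefficient is
\[
F(\lambda)=8\lambda^2 e^{-8\lambda L}\Bigl[\tfrac{1}{1+6e^{-4\lambda L}+e^{-8\lambda L}}+\tfrac{1}{(1-e^{-4\lambda L})^2}\Bigr],
\]
and the key feature is that $F(\lambda)\to\frac{1}{2L^2}$ as $\lambda\to0^+$ while $F(\lambda)\to0$ as $\lambda\to\infty$, with the supremum $\frac{1}{2L^2}$ \emph{approached but never attained} on $(0,\infty)$. The crux is therefore a one-variable bound: after the substitution $u=e^{-4\lambda L}\in(0,1)$ one must show $F<\frac{1}{2L^2}$ throughout $(0,\infty)$, so that $\sup_{\lambda\ge\eta}F(\lambda)<\frac{1}{2L^2}$ for every fixed $\eta>0$. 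The second subtlety is that the quadratic approximation degrades precisely as $\lambda\to0$, because $P_0(\{12|34\})=\tfrac18(1-\alpha^2)^2\to0$, so one cannot simply evaluate $F$ at $\lambda=0$. I would resolve this with a two-scale test-sequence argument: exhibit a sequence such as $\lambda=\sqrt\epsilon$ (for which $m/P_0(\{12|34\})\to0$, keeping the expansion valid) along which $d_{{\rm KL}}/\epsilon^2\to\frac{1}{4L^2}$, and compare with $d_{{\rm KL}}/\epsilon^2\to\tfrac12 F(\lambda)\le\tfrac12\sup_{\lambda\ge\eta}F<\frac{1}{4L^2}$ on $[\eta,\infty)$. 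This forces the maximizer into $(0,\eta)$ for all small $\epsilon$, and since $\eta$ is arbitrary, $\lambda_\epsilon\to0$. The main obstacle is thus twofold: proving the strict bound $F<\frac{1}{2L^2}$, and controlling the non-uniformity of the expansion near $\lambda=0$ via the two-scale comparison.
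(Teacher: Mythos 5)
Your proposal follows essentially the same route as the paper: for Part (i) the coefficient $F(\lambda)$ you write down is exactly the paper's quantity $q_0\,\lambda^2(1-\theta)(1/\theta-\theta)$ (so your strict bound $F<\tfrac{1}{2L^2}$ is the paper's key inequality $\tfrac{t^2}{2}(1-y)(\tfrac1y-y)<1$), your test sequence $\lambda=\sqrt{\epsilon}$ plays the role of the paper's $\lambda=x\epsilon$ with $x\to\infty$, and for Part (ii) the paper likewise reduces to maximizing $\zeta e^{-4\lambda L}\approx \epsilon\lambda e^{-4\lambda L}$. The only place your sketch is thinner than the paper is the uniformity of these expansions when $\lambda$ is allowed to grow like $1/\epsilon$ (so that $p$ or $\zeta$ is order one); the paper handles this with explicit exponential upper bounds (its inequalities (\ref{bestbound}) and (\ref{twobound})), and your argument would need the same supplement to make ``ruling out escape of the maximizer'' rigorous.
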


\begin{proof}
{\em Part (i)}
Let  $p = \frac{1}{2}(1-\exp(-2 \lambda\epsilon))$ be the probability of a state change across the interior edge of $T_\epsilon$ under the two-state model.
Let $p_1$ be the probability of generating a character on $T_\epsilon$, where one leaf is in one partition block and the other three leaves are in a different partition block, and let $q_1$ be the corresponding probability on $T_0$. Because the four pendant edges of $T_\epsilon$ and $T_0$ have equal length, we have:
\begin{equation}
\label{eq1}
p_1 = (1-p) q_1 + p q_1 = q_1,
\end{equation}
so $q_1\ln(q_1/p_1)=0$.

Let $p_2$ be the  probability of generating either one of the two characters that have a  parsimony score of 2 on $T_\epsilon$, and let $q_2$ be the corresponding probability on $T_0$.  Once again we have:
\begin{equation}
\label{eq2}
p_2 = (1-p) q_2 + p q_2 = q_2,
\end{equation}
and so $q_2\ln(q_2/p_2)=0$.

 Let $p_{12}$ be the  probability of generating the character that has a  parsimony score of 1 on $T_\epsilon$ and a parsimony score of 2 on $T_0$, and let $q_{12}$ be the corresponding probability on $T_0$.   Let $p_0$ be the  probability of generating the character that has  parsimony score 0 on $T_\epsilon$  and let $q_0$ be the corresponding probability on $T_0$.
 Notice that we can write:
 \begin{equation}
 \label{alpha}
 q_0 = \alpha^4 + (1-\alpha)^4 \mbox{ and } q_{12} = 2\alpha^2(1- \alpha)^2,
 \end{equation}
 where $\alpha = \frac{1}{2}(1-e^{-2\lambda L})$.
Moreover, 
 $$p_{12} = (1-p)q_{12} + p q_0,$$
 and
 $$p_{0} = (1-p)q_0 + p q_{12}.$$
  It follows that 
\begin{equation}
\label{eq3}
q_{12}\ln(q_{12}/p_{12}) = -q_{12}\ln(p_{12}/q_{12}) =  -q_{12}\ln(1-p + p q_0/q_{12})
\end{equation}
 and 
 \begin{equation}
\label{eq4}
q_0\ln(q_0/p_0) =-q_0 \ln(p_0/q_0) = -q_0\ln(1-p + pq_{12}/q_0).
\end{equation}
 Combining Eqns. (\ref{eq1})--(\ref{eq4})  gives:
$$ d_{{\rm KL}}(P_0, P_\epsilon)= 0 + 0 - q_{12}\ln\left(1- p + \frac{q_0}{q_{12}}p\right) - q_0\ln\left(1- p + \frac{q_{12}}{q_0}p\right).$$
If we let $\theta = \theta(\lambda)  =q_{12}/q_0$,  then:
 \begin{equation}
 \label{KLfinal}
 d_{{\rm KL}}(P_0, P_\epsilon) =- q_0 \left(\theta \ln(1+(1-\theta)\frac{p}{\theta}) + \ln(1-(1-\theta) p) \right).
 \end{equation}
Notice that, by Eqn.~(\ref{alpha}),  we have:
 $$\theta = \frac{2\alpha^2(1-\alpha)^2}{\alpha^4+(1-\alpha)^4} = \frac{2(1-e^{-4\lambda L})^2}{(1+e^{-2\lambda L})^4 + (1-e^{-2\lambda L})^4},$$ and so, in Eqn.~(\ref{KLfinal}), $\theta$ is a monotone increasing function from $0$ (at $\lambda =0$) to a limiting value of $1$ as $\lambda \rightarrow \infty$.
Note also that $q_0 = q_0(\lambda)$ is a monotone decreasing function from $1$ (at $\lambda=0$) to a limiting value of  $\frac{1}{8}$ as $\lambda \rightarrow \infty$. 
 
Now let us set  $\lambda = x \epsilon$ in Eqn.~(\ref{KLfinal}), for a fixed value of $x$.  Then
$$p = \frac{1}{2}(1-\exp(-2\epsilon^2 x)) = x\epsilon^2 + O(\epsilon^3),$$ and from Eqn.~(\ref{alpha}) we have
$\theta = 2x^2L^2\epsilon^2 + O(\epsilon^3).$ 
Therefore:
\begin{equation}
\label{ello}
\ell(x):= \lim_{\epsilon \rightarrow 0} d_{{\rm KL}}(P_0, P_\epsilon)/\epsilon^2 =- 2x^2L^2 \ln\left(1+ \frac{1}{2xL^2}\right) + x,
\end{equation}
and so $\ell(x)$ converges to $\frac{1}{4L^2}$ as $x \rightarrow \infty$. 

Next, suppose that $\lambda_\epsilon$ does not converge to zero as $\epsilon \rightarrow 0$. Then for some $\delta>0$ and some sequence
of values $\epsilon_i$ which converges to zero, we have:
\begin{equation}
\label{lambdaeq}
\lambda_{\epsilon_i}> \delta >0
\end{equation}
 for  all $i$.
 Let $\lambda_i := \lambda_{\epsilon_i}, \theta_i:= \theta(\lambda_i)  \mbox{ and } p_i := \frac{1}{2} (1-e^{-2\lambda_i\epsilon_i}).$
Notice that $(1-\theta_i)p_i$  converges to zero as $i \rightarrow \infty$. This is because we can write
$0 \leq (1-\theta_i) p_i  \leq A e^{-B\lambda_i}(1-e^{-2\lambda_i\epsilon_i})$ for constants $A,B>0$, and differential calculus shows that the maximal value of
$A e^{-B\lambda}(1-e^{-2\lambda\epsilon})$ as $\lambda>0$ varies converges to zero as $\epsilon \rightarrow 0$. 
Since  $\theta_i$  is bounded away from 0 (by Inequality~(\ref{lambdaeq})), 
it also follows that $(1/\theta_i - 1)p_i = (1-\theta_i)p_i/\theta_i$  converges to zero as $i \rightarrow \infty$.

Consequently, 
 both  $(1-\theta_i)p$ and $(1-\theta_i)p/\theta_i$ will both  lie within $(0,1)$ for all $i \geq I$ for some sufficiently large finite value $I$ (dependent on $\delta$).  We now apply the following inequality and expansion which hold for all $x,y \in (0,1)$: 
$$ -\ln(1+x) < -x +\frac{x^2}{2},  \mbox{ and }  -\ln(1-y) = \sum_{j\geq 1} \frac{y^j}{j}$$
with $x=(1-\theta_i)p_i/\theta_i$ and $y=(1-\theta_i)p_i$ in Eqn.~(\ref{KLfinal}). Noting that the two linear terms in $p_i$ from Eqn.~(\ref{KLfinal}) cancel we obtain only quadratic and higher terms in $p_i$. Thus, for all $i \geq I$: 
\begin{equation}
\label{betterbound}
d_{{\rm KL}}(P_0, P_\epsilon) < q_0(\lambda_i)\left[ \frac{p_i^2}{2}(1-\theta_i)(\frac{1}{\theta_i} -\theta_i) + \sum_{j \geq 3} \frac{(1-\theta_i)^jp_i^j}{j}\right].
\end{equation}

Moreover, $q_0(\lambda_i) < q_0(0) = 1$ (by Eqn.~(\ref{lambdaeq})) and $p_i \leq  \lambda_i \epsilon_i$ for all $i$,   so we can write:
\begin{equation}
\label{bestbound}
\frac{d_{{\rm KL}}(P_0, P_\epsilon)}{\epsilon_i^2} < \left[ \frac{\lambda_i^2}{2}(1-\theta_i)(\frac{1}{\theta_i} -\theta_i)\right] + \left[ \frac{\epsilon_i}{2}\sum_{j \geq 3} \frac{(1-\theta_i)^j\lambda_i^j\epsilon_i^{j-3}}{j}\right].
\end{equation}

Let $y(t)= \frac{2(1-e^{-2t})^2}{(1+e^{-t})^4  + (1-e^{-t})^4}$. It can be verified that
$\frac{t^2}{2}(1-y(t))\left(\frac{1}{y(t)} -y(t)\right) < 1$ for all $t>0$.
Applying this with $t = 2\lambda_i L$, we obtain the following bound on the first term in Inequality~(\ref{bestbound}):
$$\left[ \frac{\lambda_i^2}{2}(1-\theta_i)(\frac{1}{\theta_i} -\theta_i)\right] < \frac{1}{4L^2}.$$
In addition, the second term on the right in Inequality~(\ref{bestbound}) converges to zero 
as $i$ grows, since the summation term is absolutely bounded (note that $\lambda_i(1-\theta_i) \rightarrow 0$ as $\lambda_i \rightarrow \infty$)  and since the numerator term out front, $\epsilon_i$, converges to $0$.

In summary, for sufficiently large values of $i$, we have
$d_{{\rm KL}}(P_0, P_\epsilon)/\epsilon_i^2 < \frac{1}{4L^2}$. Thus,  by selecting $x$ sufficiently large we can ensure that $\ell(x)$ (given by Eqn.~(\ref{ello}), and which is based on a 
$\lambda$ value that converges to zero as $\epsilon \rightarrow 0$) takes a larger
value for $d_{{\rm KL}}(P_0, P_\epsilon)$ than the value $\lambda_i$. 
This completes the proof of Part (i).

For Part (ii), let $y = 1-\exp(-\lambda L)$ and let $\zeta = 1-\exp(-\lambda \epsilon)$; these are the probabilities of a state change on a pendant and the interior edge, respectively, in the infinite-allele model.
The 12 partitions of $\{a,b,c,d\}$ that can be generated with strictly positive probability on $T_0$ fall into five disjoint classes (and the probabilities of generating the partitions within a given class are the same).  We label these classes $C_1, \ldots, C_5$ where
$C_1 = \{abcd\}, C_2 =\{a|bcd, b|acd, c|abd, d|abc\}$,
$C_3 = \{ac|b|d, ad|b|c, bc|a|d, bd|a|c\}$,  $C_4 = \{ab|c|d, cd|a|b\}$
and $C_5=\{a|b|c|d\}$.
For $i=1, \ldots, 5$, let  $P_0[i]$ (resp. $P_\epsilon[i])$ be the probability that the generated partition lies in Class $i$ on $T_0$. 
We have: $$P_0[1] = (1-y)^4,  P_0[2] = 4y(1-y)^3, P_0[3] = 4y^2(1-y)^2, P_0[4] = 2y^2(1-y)^2, P_0[5] = 4y^3(1-y)+y^4.$$
Let $P_\epsilon[i]$ be the corresponding probabilities of $T_\epsilon$.
We can then write:
\begin{equation}                
\label{pzeta}
P_\epsilon[i] = (1-\zeta)P_0[i] + \zeta D_i,
\end{equation}
where $D_i$ is dependent only on $y$. More precisely,
\begin{equation}
\label{Deq}
D_1=D_2=D_3=0, \mbox{ }  D_4= 2e^{-2\lambda L} (1-e^{-2\lambda L}), \mbox{ } D_5 = (1-e^{-2\lambda L})^2.
\end{equation}
The expression for $D_4$ arises because when there is a state change across the interior edge of $T_\epsilon$ then a partition  in  $C_4$
 occurs precisely when there is no state change between the two leaves on one side of the edge (with probability $e^{-2\lambda L}$) and there is a state change between the two leaves on the other side of the edge 
(with probability $1-e^{-2\lambda L}$); the coefficient of 2 out front recognises that there are two ways that this can occur.  For $D_5$, a state change across the interior edge of $T_\epsilon$ leads to the partition in $C_5$ precisely if the leaves on one side of the interior edge are in different states, and so too are the leaves on the other side of the interior edges, and these two independent events have probability $1-e^{-2\lambda L}$. 

Now,  each partition within any given class  $C_i$ has the same probability of being generated on $T_0$ (moreover, the same statement applies for $T_\epsilon$ in place of $T_0$).  This allows us to write
$d_{\rm KL}$ as a sum of five terms (rather than 12) in the following way:
 \begin{equation}
\label{pzeta2}
d_{{\rm KL}}(P_0, P_\epsilon) =  - \sum_{i=1}^5 P_0[i] \ln (P_\epsilon[i]/P_0[i]) = -\sum_{i=1}^5 P_0[i] \ln \left(1-\zeta\left(\frac{P_0[i] -D_i}{P_0[i]}\right)\right).
\end{equation}
Now,  $T_\epsilon$ has one additional partition type that it can generate but $T_0$ can not, namely the partition $\{ab|cd \}$.  This partition is generated by $T_\epsilon$ with probability $\zeta e^{-4\lambda L}$ and so $\sum_{i=1}^5 P_\epsilon[i] = 1- \zeta e^{-4\lambda L}$.
By Eqn. (\ref{pzeta}), we obtain the identity 
$$1-\zeta e^{-4\lambda L}= \sum_{i=1}^5 P_\epsilon[i] = (1-\zeta)\sum_{i=1}^5 P_0[i] + \zeta\sum_{i=1}^5 D_i = \sum_{i=1}^5 P_0[i] - \zeta\sum_{i=1}^5 (P_0[i]-D_i),$$ 
and since $\sum_{i=1}^5 P_0[i] =1$ we deduce that:
\begin{equation}
\label{p0eq}
\sum_{i=1}^5 (P_0[i] -D_i) = e^{-4\lambda L},
\end{equation}
which is equivalent to the identity  $D_1+\cdots +D_5 = 1- e^{-4\lambda L}$ from Eqn.~(\ref{Deq}).

Now,  $-\ln(1-x) \geq x$ for all values of $x < 1$ and combining this with Eqns.~(\ref{pzeta2}) and (\ref{p0eq}) gives
\begin{equation}
\label{prx}
d_{\rm KL}(P_0, P_\epsilon) \geq \sum_{i=1}^5 P_0[i] \cdot \zeta \frac{(P_0[i]-D_i)}{P_0[i]} = \zeta  \sum_{i=1}^5(P_0[i]-D_i) = \zeta e^{-4\lambda L}.
\end{equation}
By Eqn.~(\ref{pzeta2}) we can write
$d_{\rm KL}(P_0, P_\epsilon) =- \sum_{i=1}^5 a_i\ln(1-\zeta b_i),$
where $a_i$ and $b_i$ are functions of $\lambda$ defined for $i=1,2,3,4, 5$ by 
$a_i= P_0[i]$, $b_1=b_2=b_3=1$ and 
$$b_4 = \frac{a_4-D_4}{a_4} = \frac{-2e^{-\lambda L}}{(1-e^{-\lambda L})}, \mbox{ } 
b_5 = \frac{-4e^{-2\lambda L}}{(1-e^{-\lambda L})(1+3e^{-\lambda L})}.$$

For $0\leq \epsilon<L$ we have $|\zeta b_i|<1$ for each value of $i$.  
This is clear for $i=1,2, 3$; the cases $i=4$ and $i=5$ require a little care as $b_4, b_5 \rightarrow -\infty$ as $\lambda \rightarrow 0$.
However, $\zeta$ also depends on $\lambda$ and for $0\leq \epsilon<\frac{1}{3}L$ we have 
$$|\zeta b_4| \leq \frac{2e^{-\lambda L}(1-e^{-\lambda \epsilon})}{(1-e^{-\lambda L})} < \frac{2}{3} \mbox{ and }
|\zeta b_5| \leq \frac{1-e^{-\lambda \epsilon}}{1-e^{-\lambda L}} < 1.$$ 

Thus we expand Eqn.~(\ref{pzeta2}) via its Taylor series and write $d_{{\rm KL}}(P_0, P_\epsilon) = \sum_{i=1}^5  a_i \sum_{j \geq 1} \frac{b_i^j \zeta^j}{j}$, and since the term for $j=1$ is the term 
$\zeta e^{-4\lambda L}$ appearing in the lower bound for  $d_{{\rm KL}}(P_0, P_\epsilon)$ (see Eqn.~((\ref{prx})),  we have:
\begin{equation}
\label{onetwobound}
0\leq d_{{\rm KL}}(P_0, P_\epsilon) - \zeta e^{-4\lambda L} \leq \sum_{i=1}^5  a_i \sum_{j \geq 2} \frac{|b_i|^j \zeta^j}{j}.
\end{equation}
Notice that the term on the right of this last inequality can be written as
$\zeta^2 \cdot \sum_{i=1}^5  a_i   |b_i|^2  \cdot \sum_{j \geq 2} \frac{(|b_i|\zeta)^{j-2}}{j}.$
For each $i=1, \ldots, 5$ the term  $a_i|b_i|^2$ is bounded above by a constant times $e^{-\lambda L}$ (this is clear for $i=1,2, 3$ and the above formulae for 
$b_4$ and $b_5$ ensure it also holds for $i=4, 5 $ as there is a term $(1-e^{-\lambda L})^2$ in $a_i$ to cancel this term in the denominator of $b_i^2$). 
Consequently, from (\ref{onetwobound}), we can write 
\begin{equation}
\label{twobound}
0 \leq d_{{\rm KL}}(P_0, P_\epsilon) - \zeta e^{-4\lambda L} \leq \zeta^2 Ce^{-c \lambda},
 \end{equation}
where $C, c$ are absolute  and strictly positive constants (not dependent on $\lambda$ or $\epsilon$). 

By differential calculus,  $\lambda'_\epsilon = \frac{1}{\epsilon}\ln\left(1+ \frac{\epsilon}{4L}\right)$ maximizes $\zeta e^{-4\lambda L}$, and 
$\lim_{\epsilon \rightarrow 0} \lambda'_\epsilon = \frac{1}{4L}$. 
Now for the value of $\lambda_\epsilon$ that maximizes $d_{{\rm KL}}(P_0, P_\epsilon)$, 
Eqn.~(\ref{twobound}) shows that $\epsilon\lambda_\epsilon$ must tend to zero as $\epsilon \rightarrow 0$, since otherwise there is a sequence
of values $\lambda_{\epsilon_i}$ which tends to infinity, which leads to values for $d_{{\rm KL}}(P_0, P_\epsilon)$ that are smaller than those
obtained by setting  $\lambda = \lambda'_\epsilon$ (due to the exponential terms in Eqn. (\ref{twobound}))  which contradicts the optimality assumption on $\lambda_\epsilon$.  Thus $\epsilon\lambda_\epsilon \rightarrow 0$
as $\epsilon \rightarrow 0$ which implies that  
$\lim_{\epsilon \rightarrow 0} d_{{\rm KL}}(P_0, P_\epsilon)/\epsilon = \lambda e^{-4\lambda L}$, and this is maximized when $\lambda = \frac{1}{4L}$.
This completes the proof of Part (ii). 
\end{proof}

Despite the contrast exhibited by Theorem~\ref{thm1} between the infinite-allele and two-state setting we have a curious correspondence between the models for the Euclidean metric (i.e. the $L^2$ metric), as the following result shows.

\begin{thm}
\label{thm2}
For the Euclidean metric $d_2$, the substitution rate value $\lambda$ that maximizes $d_2(P_0, P_\epsilon)$ for the two-state symmetric model is given by
$\lambda~= \frac{1} {2 \epsilon} \ln \left(1+\frac{\epsilon}{2L}\right)$, which converges to $\frac{1}{4L}$ as $\epsilon \rightarrow 0$.
\end{thm}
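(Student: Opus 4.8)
The plan is to piggyback on the character-probability bookkeeping already set up in the proof of Part (i) of Theorem~\ref{thm1}, since the Euclidean distance is built from exactly the same quantities. Recall from there that, by the equal-pendant-edge symmetry, one has $p_1 = q_1$ for each one-versus-three character and $p_2 = q_2$ for each parsimony-score-$2$ character, so these contribute nothing to the coordinate vector $P_0 - P_\epsilon$. The only two characters on which the distributions differ are the parsimony-$0$ character $\{abcd\}$ and the character $\{ab|cd\}$ (score $1$ on $T_\epsilon$ but $2$ on $T_0$). From the identities $p_{12} = (1-p)q_{12} + p q_0$ and $p_0 = (1-p)q_0 + p q_{12}$ I would read off the two differences directly, namely $p_{12} - q_{12} = p(q_0 - q_{12})$ and $p_0 - q_0 = -p(q_0 - q_{12})$. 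Hence both nonzero coordinates of $P_0 - P_\epsilon$ have equal magnitude, and the sum of squares collapses to
$$d_2(P_0, P_\epsilon)^2 = (p_0 - q_0)^2 + (p_{12} - q_{12})^2 = 2p^2 (q_0 - q_{12})^2.$$

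Next I would simplify the factor $q_0 - q_{12}$. Using Eqn.~(\ref{alpha}), one has $q_0 - q_{12} = \alpha^4 + (1-\alpha)^4 - 2\alpha^2(1-\alpha)^2 = (\alpha^2 - (1-\alpha)^2)^2 = (2\alpha-1)^2$, and since $2\alpha - 1 = -e^{-2\lambda L}$ this equals $e^{-4\lambda L}$. Substituting $p = \frac{1}{2}(1 - e^{-2\lambda\epsilon})$ then yields the clean closed form
$$d_2(P_0, P_\epsilon) = \frac{1}{\sqrt{2}}\,(1 - e^{-2\lambda\epsilon})\,e^{-4\lambda L}.$$

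With this explicit expression in hand, the optimization is a routine single-variable calculus problem. The constant prefactor is irrelevant, so I would differentiate $f(\lambda) = (1 - e^{-2\lambda\epsilon})e^{-4\lambda L}$, set $f'(\lambda) = 0$, and after dividing through by $e^{-4\lambda L}$ arrive at the condition $(2\epsilon + 4L)e^{-2\lambda\epsilon} = 4L$. Solving for $\lambda$ gives $\lambda = \frac{1}{2\epsilon}\ln\left(1 + \frac{\epsilon}{2L}\right)$, the claimed value. To confirm this critical point is the global maximizer, I would note that $f(0) = 0$, that $f(\lambda) \to 0$ as $\lambda \to \infty$ (the factor $e^{-4\lambda L}$ dominates), and that $f(\lambda) > 0$ for all $\lambda > 0$; since the critical-point equation has a unique positive solution, that solution is the global maximum. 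The limit $\epsilon \to 0$ then follows from $\ln\left(1 + \frac{\epsilon}{2L}\right) = \frac{\epsilon}{2L} + O(\epsilon^2)$, which gives $\lambda \to \frac{1}{4L}$.

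I do not anticipate any serious obstacle here: once Part (i) of Theorem~\ref{thm1} is in place, the entire argument reduces to the algebraic simplification $q_0 - q_{12} = e^{-4\lambda L}$ together with elementary calculus. The only two points needing a little care are verifying that the two differing characters carry differences of equal magnitude (so that the sum of squares collapses to a single term) and checking global maximality of the unique critical point, both of which are straightforward.
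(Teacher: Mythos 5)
Your proof is correct, and it reaches the paper's key identity by a genuinely different route. The paper obtains the closed form $d_2(P_0,P_\epsilon)=\frac{1}{\sqrt{2}}e^{-4L\lambda}(1-e^{-2\lambda\epsilon})$ in one line by invoking the Hadamard representation of the two-state model and an inner-product identity from the literature (Eqn.~7.28 of \cite{ste16}); you instead derive the same formula from scratch using the character-probability bookkeeping of Theorem~\ref{thm1}(i): the identities $p_1=q_1$ and $p_2=q_2$ kill six of the eight coordinates of $P_0-P_\epsilon$, the two surviving coordinates have equal magnitude $p(q_0-q_{12})$, and the algebraic simplification $q_0-q_{12}=(2\alpha-1)^2=e^{-4\lambda L}$ collapses everything to the stated expression (your normalization, over the eight partition-characters, matches the paper's exactly). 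Your derivation is more self-contained and makes transparent \emph{why} only the constant character and $ab|cd$ carry signal, at the cost of a few lines of algebra; the paper's citation is shorter but opaque to a reader unfamiliar with the Hadamard machinery. The calculus step is essentially identical in both, though you are slightly more careful than the paper in arguing that the unique positive critical point is the global maximizer (via $f(0)=0$, $f>0$ on $(0,\infty)$, and $f\to 0$ at infinity), where the paper simply asserts a ``unique local maximum.''
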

\begin{proof}
Using the Hadamard representation for the two-state symmetric model  \cite{hen89}, an associated inner product identity (Eqn.~7.28 in \cite{ste16}) shows that:
\begin{equation}
\label{euclideq}
d_2(P_0, P_\epsilon)= \frac{1}{\sqrt{2}}e^{-4L\lambda}(1-e^{-2\lambda\epsilon}).
\end{equation}
This function of $\lambda$ has a unique local maximum at $\lambda = \frac{1} {2 \epsilon} \ln \left(1+\frac{\epsilon}{2L}\right)$. Now, $\lim_{\epsilon \rightarrow 0} \frac{1} {2 \epsilon} \ln \left(1+\frac{\epsilon}{2L}\right) = \frac{1}{4L}$, and at this value of $\lambda$ we have
$\lim_{\epsilon \rightarrow 0} d_2(P_0, P_\epsilon)/\epsilon= \frac{1}{2\sqrt{2}}e^{-1}$. \end{proof}

Theorems~\ref{thm1} and ~\ref{thm2} are  illustrated in Fig. \ref{fig2} and and Fig. ~\ref{fig2a}. 
Here, the edge lengths of the tree are $L=1$ (for each of the four pendant edges) and $\epsilon = 0.05$ and $\epsilon=0.1$. 
The values were calculated using Eqns.~(\ref{KLfinal}) and (\ref{pzeta2})  (using {\em Maple}), and are consistent with the expressions used in the derivation and statement of Theorems~\ref{thm1} and \ref{thm2}. 
 
\begin{figure}[htb]
  \begin{center}
  \includegraphics[width=12cm]{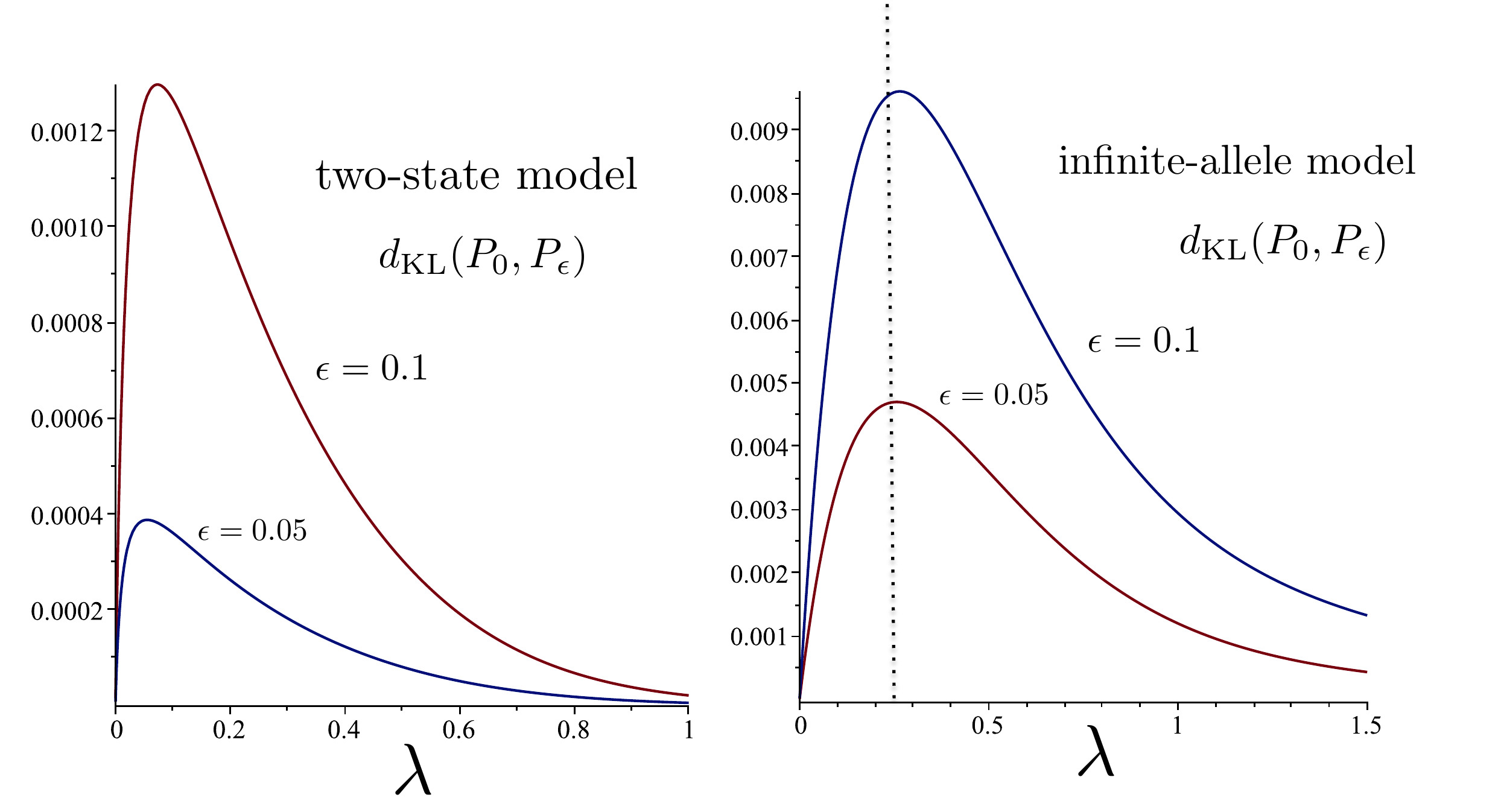}
  \end{center}
  \caption{Kullback--Leibler separation of $P_0$ and $P_\epsilon$ for a quartet tree with exterior edges of lengths $L=1$ and interior edge of length $\epsilon = 0.05$ and $\epsilon = 0.1$ as functions of $\lambda$, for the two-state and infinite-allele models (calculated using Eqns.~(\ref{KLfinal}) and (\ref{pzeta2}) respectively).  The graph on the left is consistent with a progression of the optimal $\lambda$ value towards zero as $\epsilon$ decreases. For the infinite-allele model (right),  the optimal $\lambda$ value converges to $\frac{1}{4L} = 0.25$ as $\epsilon \rightarrow 0$, and  $d_{{\rm KL}}(P_0, P_\epsilon)$ converges to $\frac{\epsilon}{4L}e^{-1}$.}  
  \label{fig2}
\end{figure}

\begin{figure}[htb]
  \begin{center}
  \includegraphics[width=7cm]{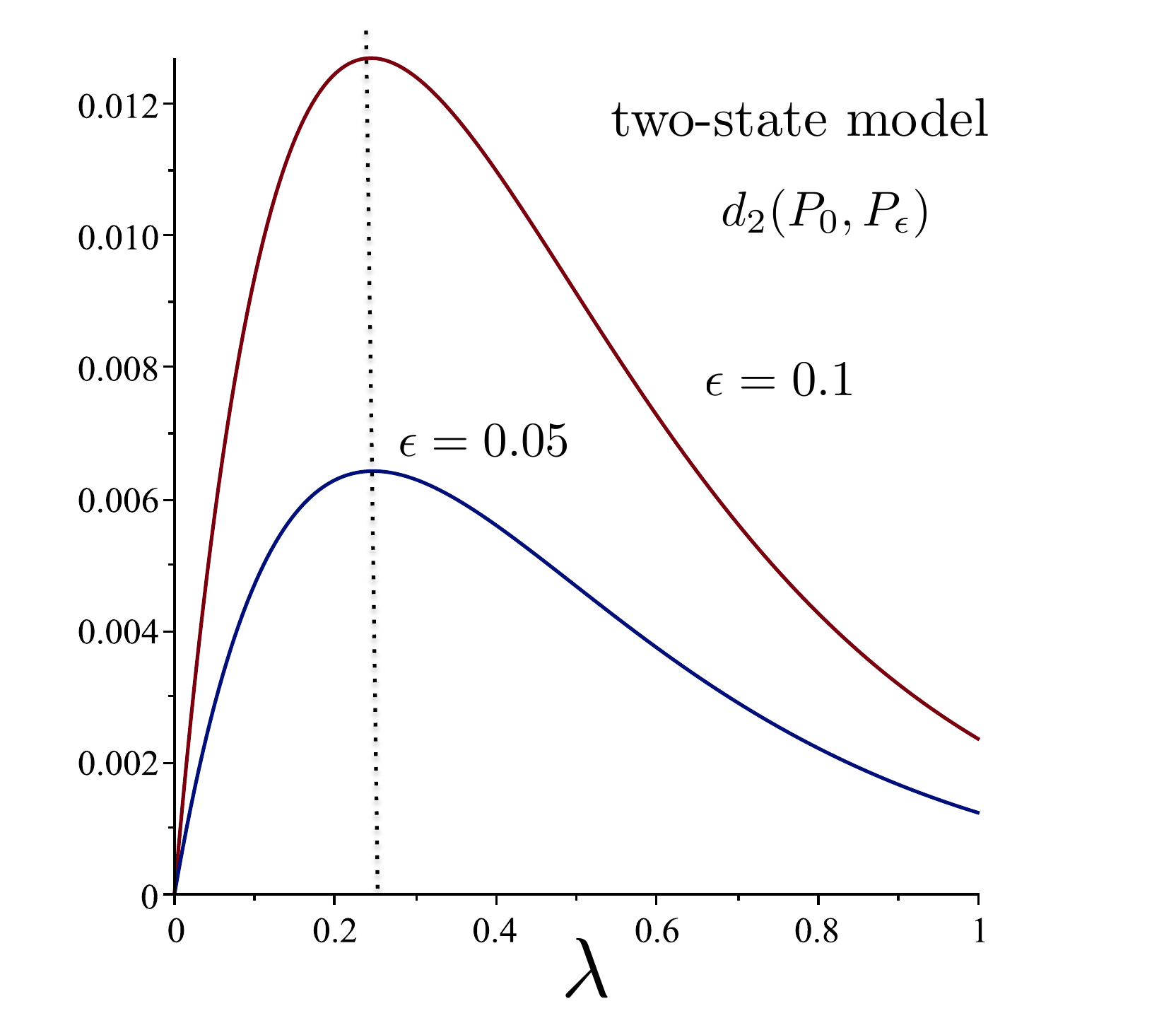}
  \end{center}
  \caption{The Euclidean distance $d_2$ between  $P_0$ and $P_\epsilon$ for a quartet tree with exterior edges of lengths $L=1$ and interior edge of length $\epsilon = 0.05$ and $\epsilon = 0.1$ as functions of $\lambda$, for the two-state model (calculated using Eqn.~\ref{euclideq}). The optimal $\lambda$ value converges to $\frac{1}{4L} = 0.25$ as $\epsilon \rightarrow 0$, giving $\frac{\epsilon}{2\sqrt{2}}e^{-1}$ as the asymptotic  value of $d_2(P_0, P_\epsilon)$.}     
  \label{fig2a}
\end{figure}

The Kullback--Leibler separation is closely related to the Fisher information  \cite{leh03}, and the usefulness of the Fisher information in phylogenetic trees has first been studied in \cite{gold}.  
It follows from large sample theory that the variance of an efficient estimator of the edge length $\epsilon$, based on the data set $D$ for a large number $k$ of characters (and with $L$ and $\lambda$ being known), is inversely proportional to the Fisher information with respect to the parameter $\epsilon$. The Fisher information is defined by:
\begin{equation*}
I(\epsilon) = - \mathbb{E}\left[\frac{d^2}{d\epsilon^2} \ln P_\epsilon(i)\right] =- \sum_i P_\epsilon(i) \frac{d^2}{d\epsilon^2} \ln P_\epsilon(i).
\end{equation*}
For the two-state symmetric model, we can
expand $\ln P_\epsilon(i)$ in a Taylor series around $\epsilon = 0$ to obtain:
\begin{equation}\begin{split}\label{FisherInfo}
d_{{\rm KL}}(P_0,P_\epsilon) & = \sum_i P_0(i) \ln\frac{P_0(i)}{P_\epsilon(i)} = \sum_i P_0(i) \left(\ln {P_0(i)} - \ln P_\epsilon(i)\right)\\
& = \sum_i P_0(i) \left(-\epsilon\frac{d}{d\epsilon}\bigg|_{\epsilon=0}\ln P_\epsilon(i) - \frac{1}{2}\epsilon^2\frac{d^2}{d\epsilon^2}\bigg|_{\epsilon=0}\ln P_\epsilon(i)\right) + O(\epsilon^3)\\
& = -\epsilon \frac{d}{d\epsilon}\bigg|_{\epsilon=0}\left(\sum_i P_\epsilon(i)\right) - \frac{1}{2}\epsilon^2 \sum_i P_0(i) \frac{d^2}{d\epsilon^2}\bigg|_{\epsilon=0}\ln P_\epsilon(i)+O(\epsilon^3)\\
& = \frac{1}{2}\epsilon^2I(0)+O(\epsilon^3).
\end{split}\end{equation}
In the two-state model with $L=1$, analysis of the coefficient of $\epsilon^2$ in Eqn.~(\ref{KLfinal}) (using {\em Mathematica}) provides the following explicit
description of the Fisher information term  $I(0)$:
\begin{equation}
I(0) = \frac{8\lambda^2\exp(-4\lambda)\cosh^2(2\lambda)}{(3+\cosh(4\lambda))\sinh^2(2\lambda)}.
\end{equation}

Let $\hat{\epsilon}$ be an asymptotically efficient estimator of the short edge length, i.e. one whose variance asymptotically achieves the Cram\'er-Rao bound (see Ch. 6.2 in \cite{leh03}). Then its relative error, based on a large number $k$ of i.i.d. generated characters, is roughly 
\begin{equation*}
\mbox{relative error}(\hat{\epsilon}) := \frac{\sqrt{\mbox{var}(\hat{\epsilon})}}{\epsilon} \approx \frac{\sqrt{1/k I(\epsilon)}}{\epsilon}.
\end{equation*}
From \eqref{FisherInfo}, we get, for small values of $\epsilon$, the following approximation under the two-state symmetric model:
\begin{equation*}
\mbox{relative error}(\hat{\epsilon}) \approx \left[2k \ d_{{\rm KL}}(P_0,P_\epsilon)\right]^{-1}.
\end{equation*}
From this we see that the optimal rate $\lambda_\epsilon$ minimizes the relative estimation error for the edge length, again underlying the usefulness of Kullback--Leibler separation in this setting.

\section{Unequal pendant edge lengths}

Theorem~\ref{thm1}(i) is not generally valid for quartets when we drop the assumption of equal edge lengths. Fig. \ref{fig3} shows the Kullback--Leibler separation $d_{{\rm KL}}(P_0,P_\epsilon)$ dependent on  $\lambda$ for a quartet tree with interior edge length $\epsilon = 0.05$ and unequal lengths of $1$ and $10$ on the pendant edges on one side of the interior edge $e$, and also $1$ and $10$ on the pendant edges on other side of $e$ (as shown in Fig.~\ref{fig1}(iii)). 

 There is still a local maximum which tends to $0$ as $\epsilon \to 0$ but the global maximum $\lambda_\epsilon$ stays bounded away from $0$.

\begin{figure}[htb]
  \begin{center}
  \includegraphics[width=7cm]{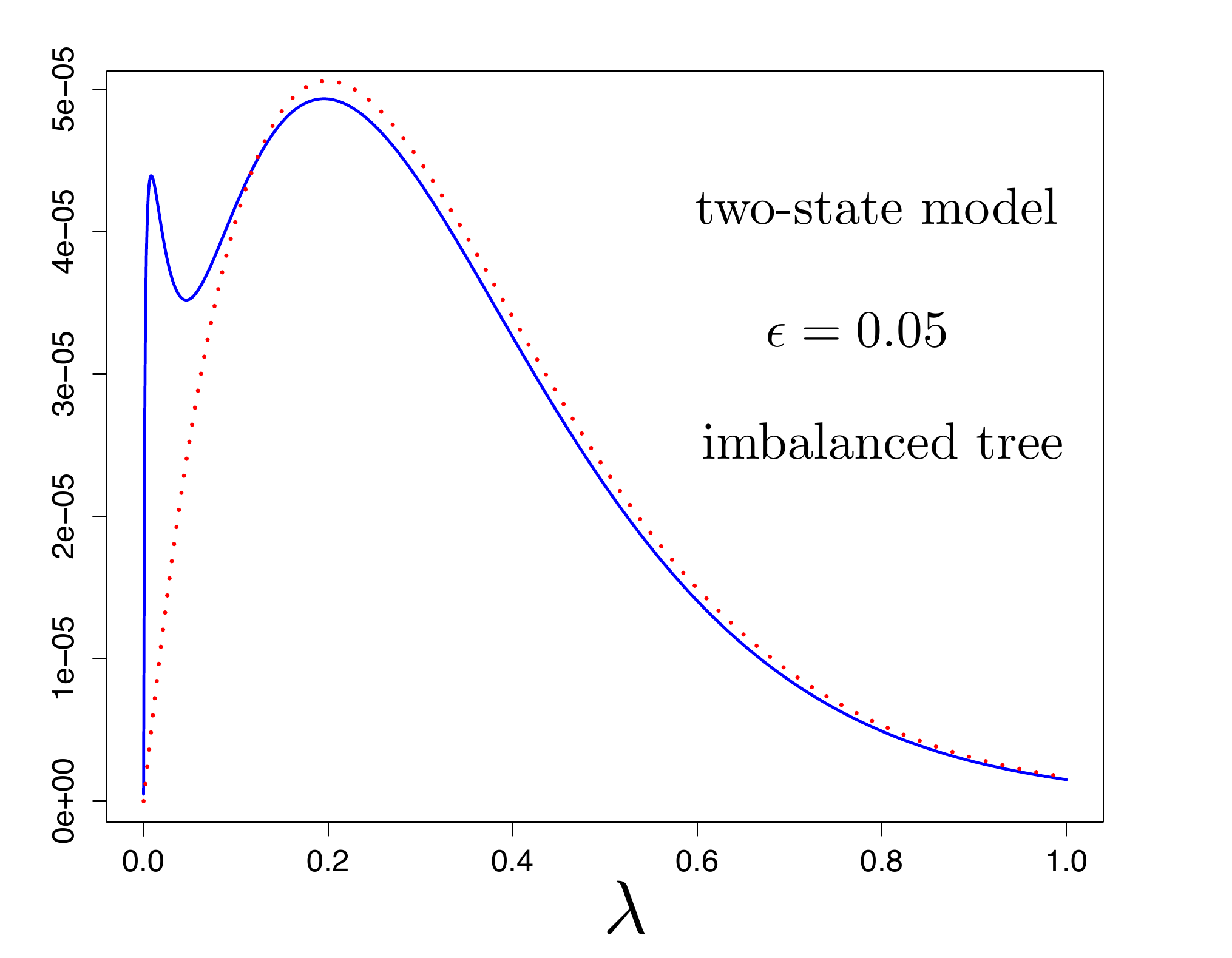}
  \end{center}
  \caption{Kullback--Leibler separation (solid line) for a quartet tree with exterior edges of lengths $1$ and $10$ on both sides of the interior edge of length $\epsilon = 0.05$ as a function of $\lambda$. The dotted line shows the Fisher information term from Eqn. \eqref{I_2}.}
  \label{fig3}
\end{figure}

This idiosyncratic shape of the curve in the highly asymmetric case can be explained as follows: If two edges on each side of the central edge are very long, they can essentially be ignored (the state at each of the two leaves is almost completely random) and the states at the leaves of the two shorter edges of lengths $1$ are more informative for inferring the total length of $2+\epsilon$ of the path joining them, and thereby for deciding whether or not  $\epsilon = 0$. Let
\begin{equation*}
p(\epsilon) = \frac{1}{2}\left(1+\exp(-2\lambda(2+\epsilon))\right)
\end{equation*}
be the probability that the two characters at the leaves of the unit-length edges are in the same state. Then the Fisher information with respect to $\epsilon$, when we ignore the characters at the leaves of the two long edges, is the given by:
\begin{equation}\begin{split}\label{I_2}
I_2(\epsilon) &= - p(\epsilon) \frac{d^2}{d\epsilon^2}\ln p(\epsilon) - (1-p(\epsilon)) \frac{d^2}{d\epsilon^2}\ln (1-p(\epsilon))\\
& = \frac{4\lambda^2}{\exp(4\lambda(2+\epsilon))-1}.
\end{split}\end{equation}
Fig. \ref{fig3} shows $\epsilon^2 I_2(0)/2$ as a function of $\lambda$ (the dotted line). Clearly, the global maximum is explained by the estimation of $\epsilon$ via the two unit-length edges.  As for Fig.~\ref{fig2}, the values were calculated by simulating characters  over a range of $\lambda$ values (using the R statistical package).

By  lessening the imbalance between the pendant edge lengths it is possible to make the two local optima for the rates have equal (global) optimal values; which provides an example where the global optimal rate for maximizing $d_{{\rm KL}}(P_0,  P_\epsilon)$ is not unique.  When the edge length imbalance decreases further, simulations suggest that Theorem~\ref{thm1}(i) remains valid (i.e.  the limit $\lambda_\epsilon \rightarrow 0$ as $\epsilon \rightarrow 0$ does not just hold for the special setting in which all pendant edges have exactly equal lengths).

\subsection{Concluding comments}

For the biologist, Theorem 1 (i) provides a caution:  in resolving a near-polytomy, it is tempting to search for genetic data that have evolved fast enough to have undergone substitution events on the interior edge; however, a slower-evolving data set may, in fact, be more likely to distinguish the resolved tree from an unresolved phylogeny.  For infinite-allele models, however, Theorem 1(ii) ensures there is  a positive optimal rate regardless of how short the interior edge is (consistent with a related result from \cite{tow07}).   Theorem 1 applies to balanced trees, and we also saw that for sufficiently unbalanced trees, these findings can change due to the appearance of a second local optimal rate that eventually becomes the global optimal rate ({\em cf} Fig.~\ref{fig3}). 
Moreover, as Fig~\ref{fig5} indicates, the results established for the two-state symmetric model appear to hold for other finite-state models such as the four-state symmetric model (often referred to as the `Jukes-Cantor (1969)' model (JC69); for details see \cite{ste16}, Section 7.2.2).

\begin{figure}[htb]
  \begin{center}
  \includegraphics[width=12cm]{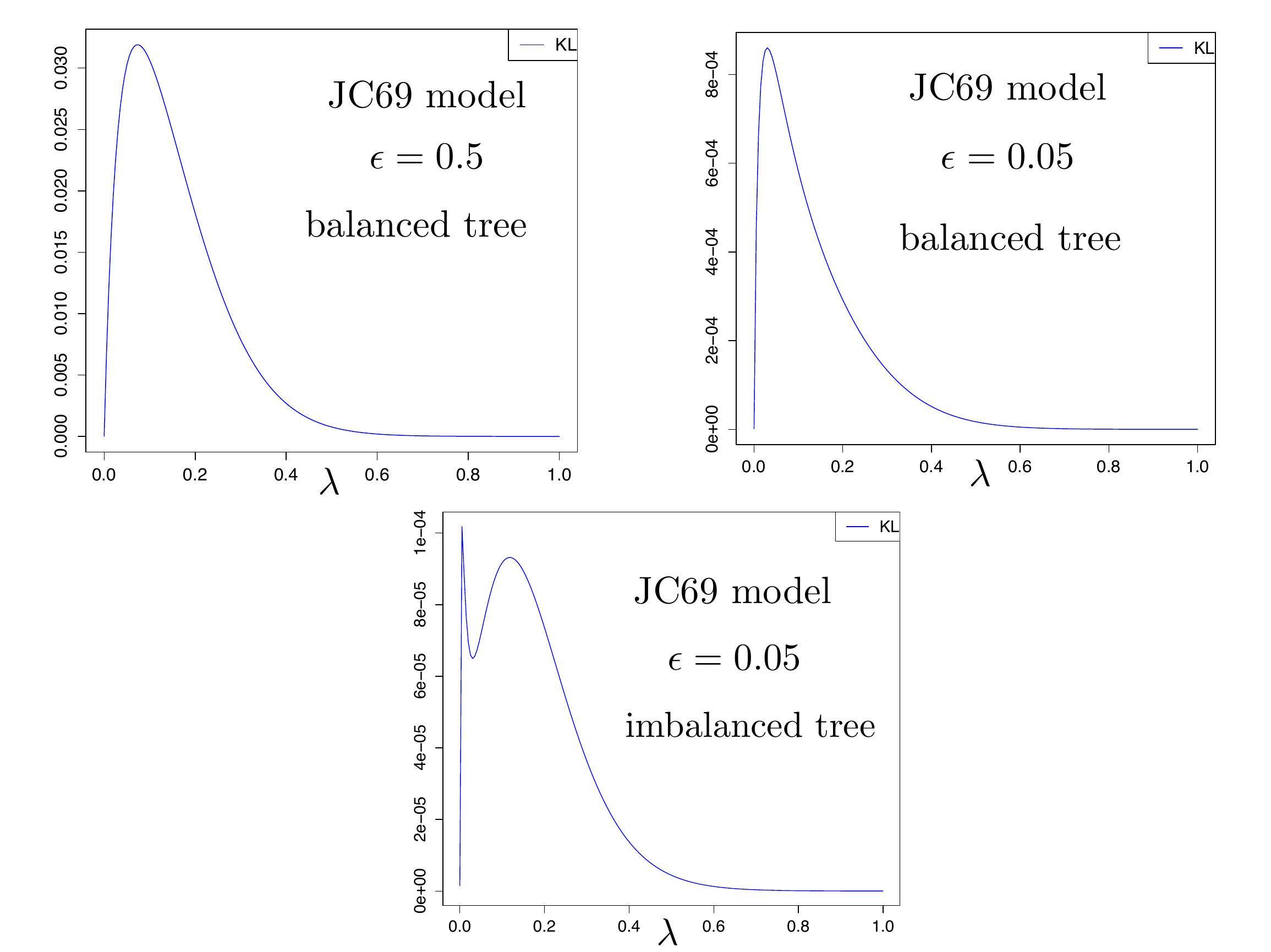}
  \end{center}
  \caption{The four-state symmetric model (JC69) shows similar behaviour to the two-state symmetric model. Top: Kullback--Leibler separation of $P_0$ and $P_\epsilon$ for a quartet tree with exterior edges of lengths $L=1$ and interior edge of length $\epsilon = 0.5$ and $\epsilon = 0.05$ as functions of $\lambda$. Bottom: Kullback--Leibler separation of $P_0$ and $P_\epsilon$ for a quartet tree with exterior edges of lengths $L=1$ and $L=10$ as shown in Fig.~\ref{fig1}(iii), and $\epsilon=0.05$.
}  
  \label{fig5}
\end{figure}

We have assumed throughout that the lengths of the four pendant edges of the tree remain fixed as the interior edge shrinks to zero. If the pendent edges are also allowed to shrink, then
Theorem 1 no longer applies.  For example, consider the binary tree $T'_\epsilon$ that has an interior edge of length $\epsilon$ and four  pendant edges of length $L\epsilon$. Then  as $\epsilon \rightarrow 0$,
the optimal rate $\lambda_\epsilon$ now {\em increases} towards infinity rather than decreasing to zero as $\epsilon \rightarrow 0$. The reason for this is quite simple. Consider the tree $T_1$ that has an interior edges of length 1 and four pendant edges each of length $L$. This tree has some optimal rate $\lambda^*$ that
maximizes $d_{{\rm KL}}$.  Now $T'_\epsilon$ is obtained from $T_1$ by multiplying each edge length of $T_1$ by $\lambda \epsilon$. 
Thus,  for $T_\epsilon$, the optimal rate is given by $\lambda_\epsilon =\lambda^*/ \epsilon \rightarrow \infty$ as $\epsilon \rightarrow 0$.

Finally,  we have considered $d_{{\rm KL}}(P_0, P_\epsilon)$ rather than $d_{{\rm KL}}(P_\epsilon, P_0)$,  partly because the former is easier to analyse mathematically, and is well-defined in the infinite-allele setting ($d_{{\rm KL}}(P_\epsilon, P_0)$ is not well-defined for this model since the partition $\{ab|cd \}$ has positive probability under $P_\epsilon$ for $\epsilon>0$ and zero probability under $P_0$). However, a more fundamental reason for our choice of $d_{{\rm KL}}(P_0, P_\epsilon)$  is that it is more natural to consider the unresolved tree ($\epsilon = 0$)  as the null hypothesis and the resolved tree as the alternative hypothesis because one typically wishes to disprove the null which in our case  means to reject the polytomy.  

\section{Acknowledgements}
We thank Jeffrey Townsend and Daniel Wegmann for helpful discussions. We also thank the three anonymous reviewers for numerous helpful suggestions that have improved the paper.

\end{document}